\newcommand{\X}{\mathcal{X}}
\newcommand{\Go}{G_{f_0}}
\newcommand{\B}{\mathds{B}}
\begin{document}
%


\title{Lyapunov exponent evaluation of a digital watermarking
scheme proven to be secure}

\author{Jacques M. Bahi, Nicolas Friot, and  Christophe Guyeux*\\
\\
Computer science laboratory DISC\\
FEMTO-ST Institute, UMR 6174 CNRS\\
University of Franche-Comt\'{e}, Besan\c con, France\\
\{jacques.bahi, nicolas.friot, christophe.guyeux\}@femto-st.fr\\
\\
* Authors in alphabetic order
}


\maketitle

\begin{abstract}
In our previous researches, a new digital watermarking
scheme based on chaotic iterations has been introduced. This scheme was both
stego-secure and topologically secure. The stego-security is to face an
attacker in the ``watermark only
attack'' category, whereas the topological security concerns
other categories of attacks. Its Lyapunov
exponent is evaluated here, to quantify the 
chaos generated by this scheme.\newline

\textbf{Keywords: }Lyapunov exponent; Information hiding; Security; 
Chaotic iterations; Digital Watermarking.
\end{abstract}

%
%

\section{Introduction}
It currently exists only three data hiding schemes being both stego-secure and
topologically secure. 
The first one is the ``Natural Watermarking'' 
with
parameter $\eta = 1$~\cite{Cayre2008}. The two others are based on chaotic iterations. The first of them is a one bit watermarking scheme~\cite{gfb10:ip,bcg11b:ip}, 
whereas the last one allows steganographic operations~\cite{fgb11:ip}. 
In order to enlarge the knowledge about the
security of these processes, the Lyapunov exponent of the
digital watermarking scheme based on chaotic iterations is evaluated here.

This document is organized as follows. In Section
\ref{sec:basic-reminders}, some basic reminders are given.
The semiconjugacy allowing the 
exponent evaluation is described in Sect.~\ref{sec:topological-semiconjugacy}. In the next one, the
exponent is evaluated. This paper ends by a conclusion
section where our contribution is summarized.

\section{Basic Reminders}
\label{sec:basic-reminders}

\subsection{Chaotic Iterations and Watermarking Scheme}
\label{sec:chaotic iterations}

Let us consider  a \emph{system} with a finite  number $\mathsf{N} \in
\mathds{N}^*$ of \emph{cells}, so that each  cell has a
boolean  \emph{state}. A sequence which  elements belong into $\llbracket
1;\mathsf{N} \rrbracket $ is a \emph{strategy}. Finally, the set of all strategies is
denoted by $\llbracket 1, \mathsf{N} \rrbracket^\mathds{N}.$ Let $S^{n}$ denotes
the $n^{th}$ term of a sequence $S$, and $V_{i}$ the $i^{th}$ component of a vector $V$.

\begin{definition}
\label{Def:chaotic iterations}
The      set       $\mathds{B}$      denoting      $\{0,1\}$,      let
$f:\mathds{B}^{\mathsf{N}}\longrightarrow  \mathds{B}^{\mathsf{N}}$ be
a  function  and  $S\in  \llbracket 1, \mathsf{N} \rrbracket^\mathds{N}$.  The  
\emph{chaotic      iterations}     are     defined      by     $x^0\in
\mathds{B}^{\mathsf{N}}$ and
\begin{equation*}
\forall    n\in     \mathds{N}^{\ast     },    \forall     i\in
\llbracket1;\mathsf{N}\rrbracket ,x_i^n=\left\{
\begin{array}{ll}
  x_i^{n-1} &  \text{ if  }S^n\neq i \\
  \left(f(x^{n-1})\right)_{S^n} & \text{ if }S^n=i.
\end{array}\right.
\end{equation*}
\end{definition}

In other words, at the $n^{th}$ iteration, only the $S^{n}-$th cell is
\textquotedblleft  iterated\textquotedblright .
Let us now recall how to define a suitable metric space where chaotic iterations
are continuous~\cite{guyeux10}.

Let $\delta $ be the \emph{discrete boolean metric}, $\delta
(x,y)=0\Leftrightarrow x=y.$ Given a function $f$, define the function:
\begin{equation*}
\begin{array}{ll}
F_{f}: & \llbracket1;\mathsf{N}\rrbracket\times \mathds{B}^{\mathsf{N}} 
\longrightarrow  \mathds{B}^{\mathsf{N}} \\
& (k,E)  \longmapsto  \left( E_{j}.\delta (k,j)+f(E)_{k}.\overline{\delta
(k,j)}\right) _{j\in \llbracket1;\mathsf{N}\rrbracket}%
\end{array}%
\end{equation*}%
\noindent Consider the phase space
$\mathcal{X} = \llbracket 1 ; \mathsf{N} \rrbracket^\mathds{N} \times
\mathds{B}^\mathsf{N}$,
\noindent and the map defined on $\mathcal{X}$ by:
\begin{equation}
G_f\left(S,E\right) = \left(\sigma(S), F_f(i(S),E)\right), \label{Gf}
\end{equation}
\noindent where $\sigma :
(S^{n})_{n\in \mathds{N}}\in \llbracket 1, \mathsf{N} \rrbracket^\mathds{N}\longrightarrow (S^{n+1})_{n\in
\mathds{N}} \in \llbracket 1, \mathsf{N} \rrbracket^\mathds{N}$ and 
$i:(S^{n})_{n\in \mathds{N}} \in \llbracket 1,
\mathsf{N} \rrbracket^\mathds{N}\longrightarrow S^{0} \in \llbracket 1,
\mathsf{N} \rrbracket$ are respectively the \emph{shift} and the \emph{initial}
functions. Then chaotic iterations can be described by the following discreet
dynamical system:
\begin{equation}
\left\{
\begin{array}{l}
X^0 \in \mathcal{X} \\
X^{k+1}=G_{f}(X^k).%
\end{array}%
\right.
\end{equation}%

To study whether this dynamical system is chaotic~\cite{Devaney}, a distance
between  $X = (S,E)$ and $Y = (\check{S},\check{E})\in
\mathcal{X}$ has been introduced in~\cite{guyeux10} as follows:
$d(X,Y)=d_{e}(E,\check{E})+d_{s}(S,\check{S})$, where: 
\begin{equation*}
\displaystyle{d_{e}(E,\check{E})}  =  \displaystyle{\sum_{k=1}^{\mathsf{N}%
}\delta (E_{k},\check{E}_{k})} \text{ and }\displaystyle{d_{s}(S,\check{S})}  = 
\displaystyle{\dfrac{9}{\mathsf{N}}\sum_{k=1}^{\infty
}\dfrac{|S^k-\check{S}^k|}{10^{k}}}.
\end{equation*}

This distance has been introduced to satisfy the following requirements. If the floor
value $\lfloor d(X,Y)\rfloor $ is equal to $n$, then the states $E$ and
$\check{E}$ differ in $n$ cells. In addition, its floating part is less than $10^{-k}$ if and only if the first $k$ terms of the two strategies are equal. Moreover, if the $k^{th}$ digit is
nonzero, then the $k^{th}$ terms of the two strategies are different.
With this metric, and the boolean vectorial negation $f_0$,
it has been
proven in~\cite{guyeux10} that,

\begin{theorem}
$G_{f_{0}}$ is continuous and chaotic in $(\mathcal{X},d)$.
\end{theorem}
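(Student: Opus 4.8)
The plan is to establish the three Devaney conditions — transitivity, density of periodic points, and sensitivity to initial conditions — for $G_{f_0}$ on $(\mathcal{X},d)$, together with the continuity of $G_{f_0}$. Continuity should be handled first and is the easiest part: since $\sigma$ is clearly continuous on the strategy space with the metric $d_s$, and $F_{f_0}$ depends only on the first term $S^0$ of the strategy and on the finite boolean vector $E$, one checks that a small enough perturbation of $(S,E)$ (small in $d$) leaves $E$ unchanged and leaves $S^0$ unchanged, hence leaves the $E$-component of the image unchanged and perturbs only the tail of the strategy; a routine $\varepsilon$–$\delta$ argument with the explicit form of $d$ closes this.

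\medskip

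For transitivity, given two open balls $B((S,E),\varepsilon_1)$ and $B((\check{S},\check{E}),\varepsilon_2)$, I would pick a point $(S',E)$ in the first ball whose strategy agrees with $S$ on the first $k_1$ terms (where $k_1$ is large enough that $10^{-k_1}$ controls $\varepsilon_1$) and is otherwise free. The key structural fact about $f_0$, the boolean vectorial negation, is that from any boolean state one can reach any other target state in at most $\mathsf{N}$ well-chosen single-cell iterations — just negate, in turn, each coordinate on which the current state and the target differ. So I would append to the first $k_1$ terms of $S$ a finite block of indices that drives $E$ (after $k_1$ applications of $G_{f_0}$, the state has become some $E''$) to $\check{E}$, and then append the first $k_2$ terms of $\check{S}$. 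The resulting strategy defines a point in the first ball whose orbit, after finitely many steps, lands in the second ball. Density of periodic points is proved by the same mechanism made periodic: near any $(S,E)$, keep the first $k$ terms of $S$, append a finite return block bringing the state back to $E$ (again using that $f_0$ lets us reach any state, in particular $E$ itself, in $\le \mathsf{N}$ steps), and then repeat this finite word forever; this yields a periodic point of $G_{f_0}$ arbitrarily close to $(S,E)$.

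\medskip

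Sensitivity follows once transitivity and density of periodic points are in hand, but it can also be shown directly and cheaply: given $(S,E)$ and any $\alpha > 0$, choose $(S,E')$ with $E' \neq E$ differing from $E$ in exactly one cell that $f_0$ never "protects" — more simply, perturb the strategy far out so the two points are within $\alpha$, yet arrange that at some iteration the differing choices of strategy index cause the states to diverge by at least one cell, so that $d$ of the images is $\ge 1$. The constant of sensitivity can be taken to be $1$ (or any value up to $\mathsf{N}$).

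\medskip

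The main obstacle is the transitivity argument, specifically the bookkeeping needed to verify that the concatenated strategy genuinely lies inside the prescribed ball while the orbit genuinely enters the target ball: one must track simultaneously how the floor part of $d$ (number of differing cells) and the floating part of $d$ (agreement of strategy prefixes) evolve under the finitely many applications of $G_{f_0}$, and make sure the "driving block" that corrects the state does not inadvertently spoil the prefix agreement already secured. Everything else is routine given the explicit metric and the transparent combinatorial behaviour of $f_0$; indeed this is exactly the content of the cited result~\cite{guyeux10}, so the proof here can legitimately be a condensed recollection of that argument.
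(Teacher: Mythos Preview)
Your sketch is correct and follows the standard route to Devaney chaos for this system: continuity from the explicit metric, transitivity and density of periodic points by exploiting that $f_0$ lets any state be driven to any other in at most $\mathsf{N}$ single-cell flips, and sensitivity either as a consequence (Banks \emph{et al.}) or directly with constant $1$. You have also correctly identified the only place where care is needed, namely the prefix bookkeeping in the transitivity construction.

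There is, however, nothing to compare your argument against in this paper: the theorem is \emph{not} proved here. It is stated and attributed to~\cite{guyeux10}, and the present paper simply imports it as background before moving on to the semiconjugacy and the Lyapunov exponent. You noticed this yourself in your final sentence. So your proposal is a faithful condensed recollection of the argument one expects in~\cite{guyeux10}, but the paper itself offers no proof, alternative or otherwise.
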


The digital watermarking scheme proposed in~\cite{gfb10:ip,bcg11b:ip} 
is simply the iterations of this dynamical system on the least significant 
coefficients of the considered media. Each property exhibited by 
the dynamical system will then be possessed too by the watermarking scheme.
For further explanations, see~\cite{gfb10:ip,bcg11b:ip}.

\subsection{The Lyapunov Exponent}

Some dynamical systems are very sensitive to small changes in their initial
condition, which is illustrated by both the constants of sensitivity to initial
conditions and of expansivity~\cite{guyeux10}. However, these
variations can quickly take enormous proportions, grow exponentially, and none
of these constants can illustrate that. Alexander Lyapunov has examined this
phenomenon and introduced an exponent that measures the rate at which these
small variations can grow:

\begin{definition}
\label{def:lyapunov}
Given $f: \mathds{R} \longrightarrow \mathds{R}$, the \emph{Lyapunov exponent}
of the system composed by $x^0 \in \mathds{R}$ and $x^{n+1} = f(x^n)$
\noindent is defined by
$\displaystyle{\lambda(x_0)=\lim_{n \to +\infty} \dfrac{1}{n} \sum_{i=1}^n \ln \left| ~f'\left(x^{i-1}\right)\right|}$.
\end{definition}

Consider a dynamic system with an infinitesimal error on the initial condition $x_0$. 
When the Lyapunov exponent is positive, this
error will increase (situation of chaos), whereas it will decrease if $\lambda(x_0)\leqslant 0$.

\begin{example}
The Lyapunov exponent of the logistic map~\cite{Arroyo08} becomes positive for
$\mu>3,54$, but it is always smaller than 1. The tent map~\cite{Wang20093089} 
and the doubling map of the circle~\cite{Richeson2008251} have a Lyapunov exponent equal to $\ln(2)$.
\end{example}

To evaluate the Lyapunov exponent of our digital watermarking scheme,
 chaotic iterations must be
described by a differentiable function on $\mathds{R}$. To do so,
a topological semiconjugacy between the phase space $\mathcal{X}$ and
$\mathds{R}$ must be written.

\section{A Topological Semiconjugacy}
\label{sec:topological-semiconjugacy}
\subsection{The Phase Space is an Interval of the Real Line}

\subsubsection{Toward a Topological Semiconjugacy}

We show, by using a topological
semiconjugacy, that chaotic iterations on $\mathcal{X}$ can be
described as
iterations on a real interval. To do so, some
notations and terminologies must be introduced. 

Let $\mathcal{S}_\mathsf{N}=\llbracket 1 ; \mathsf{N} \rrbracket^\mathds{N}$ be
the set of sequences belonging into $\llbracket
1; \mathsf{N}\rrbracket$ and $\mathcal{X}_{\mathsf{N}} = \mathcal{S}_
\mathsf{N}
\times \B^\mathsf{N}$.
In what follows and for easy understanding, 
we will assume that $N = 10$.
However, an 
equivalent formulation of the following can be easily obtained 
by replacing the base $10$ by any base $\mathsf{N}$.

\begin{definition}
The function $\varphi: \mathcal{S}_{10} \times\mathds{B}^{10}
\rightarrow \big[
0, 2^{10} \big[$ is defined by:
\begin{equation*}
 \begin{array}{cccl}
\varphi: & \mathcal{X}_{10} = \mathcal{S}_{10} \times\mathds{B}^{10}&
\longrightarrow & \big[ 0, 2^{10} \big[ \\
 & \left((S^0, S^1, \hdots ); (E_0, \hdots, E_9)\right) &
\longmapsto &
\varphi \left((S,E)\right)
\end{array}
\end{equation*}
where $(S,E) = \left((S^0, S^1, \hdots ); (E_0, \hdots, E_9)\right)$, and
$\varphi\left((S,E)\right)$ is the real number:
\begin{itemize}
\item whose integral part $e$ is $\displaystyle{\sum_{k=0}^9 2^{9-k}
E_k}$, that
is, the binary digits of $e$ are $E_0 ~ E_1 ~ \hdots ~ E_9$.
\item whose decimal part $s$ is equal to $s = 0,S^0~ S^1~ S^2~ \hdots =
\sum_{k=1}^{+\infty} 10^{-k} S^{k-1}.$ 
\end{itemize}
\end{definition}

$\varphi$ realizes the association between a point of $\mathcal{X}_{10}$
and a
real number into $\big[ 0, 2^{10} \big[$. We must now translate the digital
watermarking process $\Go$ based  on chaotic
iterations  on this real interval. To do so, two intermediate
functions over $\big[ 0, 2^{10} \big[$ denoted by $e$ and $s$ must be
introduced:

\begin{definition}
\label{def:e et s}
Let $x \in \big[ 0, 2^{10} \big[$ and:
\begin{itemize}
\item $e_0, \hdots, e_9$ the binary digits of the integral part of $x$:
$\displaystyle{\lfloor x \rfloor = \sum_{k=0}^{9} 2^{9-k} e_k}$.
\item $(s^k)_{k\in \mathds{N}}$ the digits of $x$, where the chosen
decimal
decomposition of $x$ is the one that does not have an infinite number of
9: 
$\displaystyle{x = \lfloor x \rfloor + \sum_{k=0}^{+\infty} s^k
10^{-k-1}}$.
\end{itemize}
$e$ and $s$ are thus defined as follows:
\begin{equation*}
\begin{array}{cccl}
e: & \big[ 0, 2^{10} \big[ & \longrightarrow & \mathds{B}^{10} \\
 & x & \longmapsto & (e_0, \hdots, e_9)
\end{array}
\end{equation*}
and
\begin{equation*}
 \begin{array}{cccc}
s: & \big[ 0, 2^{10} \big[ & \longrightarrow & \llbracket 0, 9
\rrbracket^{\mathds{N}} \\
 & x & \longmapsto & (s^k)_{k \in \mathds{N}}
\end{array}
\end{equation*}
\end{definition}

We are now able to define the function $g$, whose goal is to translate
the
chaotic iterations $\Go$ on an interval of $\mathds{R}$.

\begin{definition}\label{def:function-g-on-R}
$g:\big[ 0, 2^{10} \big[ \longrightarrow \big[ 0, 2^{10} \big[$ is
by definition such that g(x) is the real number of $\big[ 0, 2^{10} \big[$
defined bellow:
\begin{itemize}
\item its integral part has a binary decomposition equal to $e_0',
\hdots,
e_9'$, with:
 \begin{equation*}
e_i' = \left\{
\begin{array}{ll}
e(x)_i & \textrm{ if } i \neq s^0\\
e(x)_i + 1 \textrm{ (mod 2)} & \textrm{ if } i = s^0\\
\end{array}
\right.
\end{equation*}
\item whose decimal part is $s(x)^1, s(x)^2, \hdots$
\end{itemize}
\end{definition}

In other words, if $x = \displaystyle{\sum_{k=0}^{9} 2^{9-k} e_k + 
\sum_{k=0}^{+\infty} s^{k} ~10^{-k-1}}$, then:
\begin{equation*}
g(x) =
\displaystyle{\sum_{k=0}^{9} 2^{9-k} (e_k + \delta(k,s^0) \textrm{ (mod
2)}) + 
\sum_{k=0}^{+\infty} s^{k+1} 10^{-k-1}}. 
\end{equation*}

\subsubsection{Defining a Metric on $\big[ 0, 2^{10} \big[$}

Numerous metrics can be defined on the set $\big[ 0, 2^{10} \big[$, the
most
usual one being the Euclidian distance
$\Delta(x,y) = |y-x|^2$.
This Euclidian distance does not reproduce exactly the notion of
proximity
induced by our first distance $d$ on $\X$. Indeed $d$ is richer than
$\Delta$.
This is the reason why we have to introduce the following metric:

\begin{definition}
Given $x,y \in \big[ 0, 2^{10} \big[$,
$D$ denotes the function from $\big[ 0, 2^{10} \big[^2$ to $\mathds{R}^
+$
defined by: $D(x,y) = D_e\left(e(x),e(y)\right) + D_s
\left(s(x),s(y)\right)$,
where:
\begin{center}
$\displaystyle{D_e(e,\check{e}) = \sum_{k=0}^\mathsf{9} \delta (e_k,
\check{e}_k)}$, ~~and~ $\displaystyle{D_s(s,\check{s}) = \sum_{k = 1}^
\infty
\dfrac{|s^k-\check{s}^k|}{10^k}}$.
\end{center}
\end{definition}

\begin{proposition}
$D$ is a distance on $\big[ 0, 2^{10} \big[$.
\end{proposition}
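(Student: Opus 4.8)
The plan is to verify the three defining axioms of a metric for $D = D_e + D_s$ by reducing each one to the corresponding property of the two summands $D_e$ and $D_s$, since a sum of two metrics (or pseudometrics that jointly separate points) is again a metric. First I would observe that $D_e$ is, up to the identification $x \mapsto e(x)$, exactly the Hamming distance on $\mathds{B}^{10}$, i.e. it counts the number of positions in which the binary expansions of $\lfloor x \rfloor$ and $\lfloor y \rfloor$ differ; and $D_s$ is, up to the identification $x \mapsto s(x)$, the distance $d_s$ already used on strategies (it is a weighted $\ell^1$-type series, absolutely convergent because $|s^k - \check s^k| \leqslant 9$, so $D_s(s,\check s) \leqslant \sum_{k\geqslant 1} 9\cdot 10^{-k} = 1 < \infty$). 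Thus $D$ is finite-valued, as required for a map into $\mathds{R}^+$.

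Next I would check the three axioms. \emph{Non-negativity} is immediate since both $\delta$ and the absolute values are non-negative. For \emph{symmetry}, both $\delta(e_k,\check e_k)$ and $|s^k - \check s^k|$ are symmetric in their arguments, hence so are $D_e$, $D_s$, and their sum. For the \emph{triangle inequality}, I would argue termwise: $\delta$ satisfies $\delta(a,c)\leqslant \delta(a,b)+\delta(b,c)$ on $\mathds{B}$ (it is the discrete metric), so summing over $k$ gives $D_e(e,\check e)\leqslant D_e(e,e')+D_e(e',\check e)$; similarly $|s^k - \check s^k| \leqslant |s^k - s'^k| + |s'^k - \check s^k|$ for each $k$, and dividing by $10^k$ and summing yields the triangle inequality for $D_s$; adding the two gives it for $D$. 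The only genuinely delicate point is the \emph{separation} axiom, $D(x,y)=0 \Rightarrow x=y$: here one must use that the decimal decomposition of $x$ is pinned down by the convention of Definition~\ref{def:e et s} (no infinite trailing sequence of $9$'s). Indeed $D(x,y)=0$ forces both $D_e(e(x),e(y))=0$ and $D_s(s(x),s(y))=0$; the first gives $e(x)_k = e(y)_k$ for all $k$, hence $\lfloor x \rfloor = \lfloor y \rfloor$, and the second gives $s(x)^k = s(y)^k$ for all $k$, hence the two chosen decimal expansions coincide, hence $x = \lfloor x\rfloor + \sum_k s(x)^k 10^{-k-1} = \lfloor y\rfloor + \sum_k s(y)^k 10^{-k-1} = y$.

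The step I expect to be the main obstacle is precisely this last one, because without the ``no infinite number of $9$'s'' normalization two distinct representations could give the same real number $x$ but different sequences $s(\cdot)$, and conversely a single real could be assigned two strategies; one must therefore be careful that $e$ and $s$ are honestly well-defined functions of $x\in[0,2^{10}[$ (which the convention guarantees) and that equality of the normalized data $(e(x),s(x)) = (e(y),s(y))$ genuinely recovers $x=y$. Everything else — non-negativity, finiteness, symmetry, and the triangle inequality — is routine and follows, as sketched above, by handling the integral part via the discrete metric $\delta$ and the decimal part via the absolutely convergent weighted series, exactly mirroring the proof that $d = d_e + d_s$ is a distance on $\mathcal{X}$.
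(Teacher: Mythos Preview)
Your proof is correct and follows essentially the same route as the paper's: verify non-negativity and separation, symmetry, and the triangle inequality by reducing each to the corresponding property of $\delta$ and of $|\cdot|$, summed termwise. You are in fact more thorough than the paper, which does not explicitly mention finiteness of $D_s$ nor the role of the ``no infinite trailing $9$'s'' convention in making the separation axiom go through.
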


\begin{proof}
The three axioms defining a distance must be checked.
\begin{itemize}
\item $D \geqslant 0$, because everything is positive in its definition.
If
$D(x,y)=0$, then $D_e(x,y)=0$, so the integral parts of $x$ and $y$ are
equal
(they have the same binary decomposition). Additionally, $D_s(x,y) = 0$,
then
$\forall k \in \mathds{N}^*, s(x)^k = s(y)^k$. In other words, $x$ and
$y$ have
the same $k-$th decimal digit, $\forall k \in \mathds{N}^*$. And so $x=y
$.
\item $D(x,y)=D(y,x)$.
\item Finally, the triangular inequality is obtained due to the fact
that both
$\delta$ and $|x-y|$ satisfy it.
\end{itemize}
\end{proof}

The convergence of sequences according to $D$ is not the same than the
usual
convergence related to the Euclidian metric. For instance, if $x^n \to x
$
according to $D$, then necessarily the integral part of each $x^n$ is
equal to
the integral part of $x$ (at least after a given threshold), and the
decimal
part of $x^n$ corresponds to the one of $x$ ``as far as required''.
To illustrate this fact, a comparison between $D$ and the Euclidian
distance is
given Figure \ref{fig:comparaison de distances}. These illustrations
show that
$D$ is richer and more refined than the Euclidian distance, and thus is
more
precise.

\begin{figure}[t]
\begin{center}
  \subfigure[Function $x \to dist(x;1,234) $ on the interval
$(0;5)$.]{\includegraphics[scale=.4]{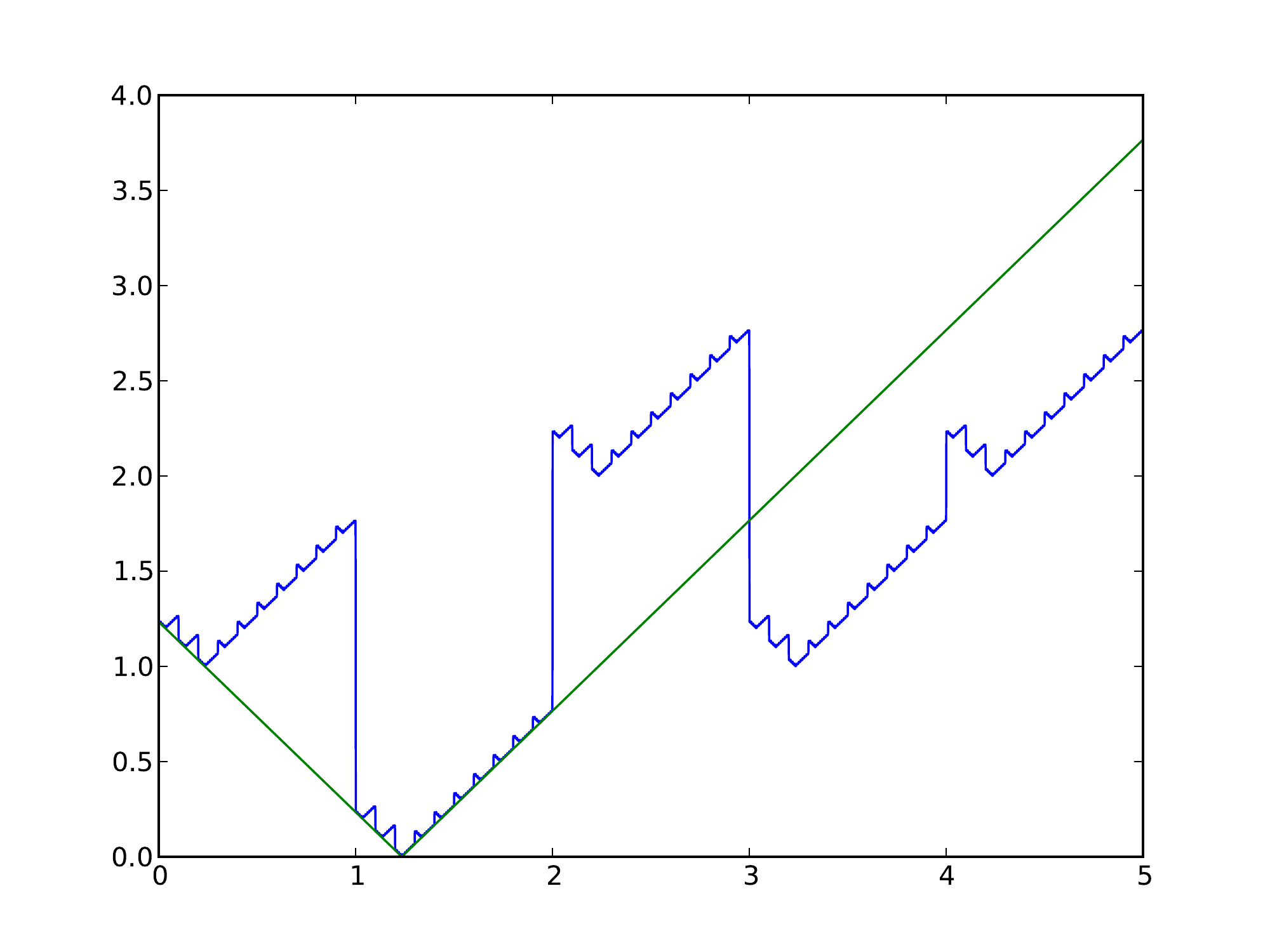}}\quad
  \subfigure[Function $x \to dist(x;3) $ on the interval
$(0;5)$.]{\includegraphics[scale=.4]{DvsEuclidien2}}
\end{center}
\caption{Comparison between $D$ (in blue) and the Euclidian distance (in
green).}
\label{fig:comparaison de distances}
\end{figure}

\subsubsection{The Semiconjugacy}

It is now possible to define a topological semiconjugacy between
$\mathcal{X}$
and an interval of $\mathds{R}$:

\begin{theorem}
Chaotic iterations on the phase space $\mathcal{X}$ are simple
iterations on
$\mathds{R}$, which is illustrated by the semiconjugacy given
bellow:
\begin{equation*}
\begin{CD}
\left(~\mathcal{S}_{10} \times\mathds{B}^{10}, d~\right) @>G_{f_0}>>
\left(~\mathcal{S}_{10} \times\mathds{B}^{10}, d~\right)\\
    @V{\varphi}VV                    @VV{\varphi}V\\
\left( ~\big[ 0, 2^{10} \big[, D~\right)  @>>g> \left(~\big[ 0, 2^{10}
\big[,
D~\right)
\end{CD}
\end{equation*}
\end{theorem}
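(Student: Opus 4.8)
The statement asserts that $\varphi$ is a topological semiconjugacy, so the proof has to establish three things: that the square commutes, i.e. $\varphi\circ G_{f_0}=g\circ\varphi$; that $\varphi$ is onto $\big[0,2^{10}\big[$; and that $\varphi$ is continuous from $(\mathcal{X}_{10},d)$ to $\left(\big[0,2^{10}\big[,D\right)$. I would take the commutation first, since $g$ was designed exactly so that this holds, then surjectivity, then continuity.

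\emph{Commutation.} Fix $X=(S,E)\in\mathcal{X}_{10}$ and unfold both sides. Since $f_0$ is the boolean vectorial negation, $G_{f_0}(S,E)=(\sigma(S),F_{f_0}(S^0,E))$ is the pair made of $\sigma(S)=(S^1,S^2,\dots)$ and of the vector $E$ with its $S^0$-th coordinate negated; applying $\varphi$ gives the real whose integer part has binary digits $E_0\cdots E_9$ with the bit of index $S^0$ flipped and whose decimal part is $0.S^1S^2\cdots$. Conversely, with $x=\varphi(S,E)$ one has $e(x)=E$ and $s(x)=(S^0,S^1,\dots)$ — here one uses that the expansion written by $\varphi$ is, by construction, the one without an infinite tail of $9$'s, hence the one selected in Definition~\ref{def:e et s} — so that Definition~\ref{def:function-g-on-R} makes $g(x)$ flip precisely the integer bit of index $s(x)^0=S^0$ and drop the first decimal digit. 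The two reals coincide, hence $\varphi\bigl(G_{f_0}(X)\bigr)=g\bigl(\varphi(X)\bigr)$. This is a plain unrolling of the definitions; the only care needed is the index bookkeeping between $\llbracket 1,\mathsf{N}\rrbracket$ (cells and strategy values) and $\llbracket 0,9\rrbracket$ (digit positions), which match up coordinate by coordinate.

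\emph{Surjectivity and continuity.} For surjectivity, given $x\in\big[0,2^{10}\big[$ set $E=e(x)$ and let $S$ be the strategy read off the digit sequence $s(x)$; then the integer part of $\varphi(S,E)$ has binary digits $e(x)$ and its decimal part is $0.s(x)^0s(x)^1\cdots$, so $\varphi(S,E)=x$. (Injectivity is not required for a semiconjugacy, and in fact fails, since two codings of one real are identified.) For continuity, take $X=(S,E)$, $Y=(\check S,\check E)$. As $d_e(E,\check E)=\sum_k\delta(E_k,\check E_k)\in\mathds{N}$, the constraint $d(X,Y)<1$ already forces $E=\check E$, hence $D_e\bigl(e(\varphi X),e(\varphi Y)\bigr)=0$ and $\lfloor\varphi X\rfloor=\lfloor\varphi Y\rfloor$. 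The remaining term $d_s(S,\check S)$ controls, via the normalisation recalled in Section~\ref{sec:chaotic iterations}, the first index at which $S$ and $\check S$ differ: if it is small, $S$ and $\check S$ share a long initial block, so the decimal expansions of $\varphi X$ and $\varphi Y$ do too, making $D_s\bigl(s(\varphi X),s(\varphi Y)\bigr)$ as small as wanted. Combining the two blocks yields, for every $\varepsilon>0$, an explicit $\delta>0$ with $d(X,Y)<\delta\Rightarrow D(\varphi X,\varphi Y)<\varepsilon$; in fact $\varphi$ is uniformly continuous, being, up to the normalising constants, distance non-increasing on each block.

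\emph{Main obstacle.} The proof is not deep, but two routine points need genuine care. In the continuity step one must make ``$d_s$ small $\Rightarrow$ long common prefix of the strategies'' quantitative while carrying along the distinct normalising constants occurring in $d_s$ and $D_s$ and the one-step index shift produced by $\varphi$ placing $S^0$ as the first decimal digit. More importantly, one must keep the decimal-expansion convention coherent across $\varphi$, $e$, $s$ and $g$ (always the one without a tail of $9$'s): if it drifted, the commutation square would break on the exceptional reals admitting two codings. Once these conventions are nailed down, each of the three verifications reduces to a direct computation.
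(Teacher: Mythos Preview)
Your proof is correct and considerably more complete than the paper's, which consists of the single sentence ``$\varphi$ has been constructed in order to be continuous and onto.'' The paper leaves the commutation $\varphi\circ G_{f_0}=g\circ\varphi$ entirely implicit (presumably because $g$ was defined precisely to make it hold) and does not spell out either the continuity or the surjectivity argument; you supply all three verifications explicitly. Your closing remarks on the decimal-expansion convention (no infinite tail of $9$'s) and on the $\llbracket 1,\mathsf{N}\rrbracket$ versus $\llbracket 0,9\rrbracket$ index bookkeeping also flag genuine loose ends in the paper's own setup that its one-line proof simply passes over.
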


\begin{proof}
$\varphi$ has been constructed in order to be continuous and onto.
\end{proof}

In other words, $\mathcal{X}$ is approximately equal to $\big[ 0, 2^
\mathsf{N}
\big[$.

\subsection{Chaotic Iterations Described as a Real
Function}

%

%



It can be remarked that the function $g$ is a piecewise linear function:
it is
linear on each interval having the form $\left[ \dfrac{n}{10},
\dfrac{n+1}{10}\right[$, $n \in \llbracket 0;2^{10}\times 10 \rrbracket$
and its
slope is equal to 10. Let us justify these claims:

\begin{proposition}
\label{Prop:derivabilite des ICs}
Chaotic iterations $g$ defined on $\mathds{R}$ have derivatives of all
orders on
$\big[ 0, 2^{10} \big[$, except on the 10241 points in $I$ defined by
$\left\{
\dfrac{n}{10} ~\big/~ n \in \llbracket 0;2^{10}\times 10\rrbracket
\right\}$.

Furthermore, on each interval of the form $\left[ \dfrac{n}{10},
\dfrac{n+1}{10}\right[$, with $n \in \llbracket 0;2^{10}\times 10
\rrbracket$,
$g$ is a linear function, having a slope equal to 10: $\forall x \notin
I,
g'(x)=10$.
\end{proposition}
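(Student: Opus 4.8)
The plan is to establish both assertions at once by writing $g$ explicitly as an affine map on each interval $J_n = \left[\frac{n}{10},\frac{n+1}{10}\right[$ with $n \in \llbracket 0 ; 2^{10}\times 10 - 1 \rrbracket$; these intervals cover $\big[0,2^{10}\big[$, and their left endpoints, together with $2^{10}$, are exactly the $10241$ points of $I$. First I would note that $x \in J_n$ is equivalent to $\lfloor 10x\rfloor = n$, and that, since the decimal decomposition chosen for $x$ is the one without an infinite tail of $9$'s, one has $10x = 10\lfloor x\rfloor + s(x)^0 \,,\, s(x)^1 s(x)^2 \ldots$; hence $\lfloor x\rfloor$ and $s(x)^0$ are both constant on $J_n$, equal to $\lfloor n/10\rfloor$ and $n \bmod 10$ respectively. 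Consequently the vector $e(x)=(e_0,\ldots,e_9)$ and the index $s^0$ appearing in Definition~\ref{def:function-g-on-R} depend only on $n$ throughout $J_n$.

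Next I would read off the value of $g$ on $J_n$ from Definition~\ref{def:function-g-on-R}. Its integral part is obtained from $(e_0,\ldots,e_9)$ by flipping the single bit of index $s^0$; as the inputs are constant on $J_n$, this integral part equals some constant $c_n \in \llbracket 0 ; 2^{10}-1\rrbracket$. Its decimal part is $0.s^1 s^2 s^3 \ldots$, that is, the fractional part of $x$ with its first digit removed and the others shifted one place up; algebraically $0.s^1 s^2 \ldots = 10\big(x - \lfloor x\rfloor\big) - s^0 = 10x - 10\lfloor x\rfloor - s^0 = 10x - n$, a number which genuinely lies in $[0,1[$ precisely because $x\in J_n$, and which has no infinite $9$-tail, so it is the admissible decomposition. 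Adding the two contributions yields
\begin{equation*}
g(x) = c_n + 10x - n \qquad \text{for all } x \in J_n .
\end{equation*}
This is affine with slope $10$. Therefore, at any $x \in \big[0,2^{10}\big[ \setminus I$ — which is necessarily interior to the interval $J_n$ with $n = \lfloor 10x\rfloor$ — the function $g$ coincides, on a whole neighbourhood, with $t \mapsto c_n + 10t - n$; so $g$ has derivatives of all orders at $x$ and $g'(x) = 10$, which is exactly what the proposition claims.

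It remains to explain why the $10241$ points of $I$ are genuinely exceptional: at $x = n/10$ the affine formula breaks down because, approaching from the left inside $J_{n-1}$, $g$ tends to $c_{n-1}+1$, whereas $g(n/10) = c_n$, and the reset of the decimal part together with the change of integral part make these two values differ in general, so $g$ is not even continuous there. I expect the main obstacle to be not any single computation but the careful handling of the ``no infinite tail of $9$'s'' convention: one must verify that it is precisely this convention which keeps $s^0$ constant on the \emph{half-open} intervals $J_n$ (and not on their closures), since that is what makes the digit-shift identity $0.s^1 s^2 \ldots = 10x - n$ hold uniformly on each $J_n$ and pins the set of non-differentiability points down to $I$.
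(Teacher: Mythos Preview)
Your argument is correct and follows essentially the same route as the paper's own proof: both fix an interval $J_n=[n/10,(n+1)/10[$, observe that $e(x)$ and $s(x)^0$ are constant there, and deduce that $g$ is affine with slope $10$ on $J_n$. Your write-up is in fact more careful than the paper's, since you give the explicit formula $g(x)=c_n+10x-n$, justify why the points of $I$ are genuine exceptions, and flag the role of the ``no infinite tail of $9$'s'' convention in keeping $s^0$ constant on the half-open intervals.
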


\begin{proof}
Let $I_n = \left[ \dfrac{n}{10}, \dfrac{n+1}{10}\right[$, with $n \in
\llbracket
0;2^{10}\times 10 \rrbracket$. All the points of $I_n$ have the same
integral
part $e$ and the same decimal part $s^0$: on the set $I_n$,  functions
$e(x)$
and $x \mapsto s(x)^0$ of Definition \ref{def:e et s} only depend on $n
$. So all
the images $g(x)$ of these points $x$:
\begin{itemize}
\item Have the same integral part, which is $e$, except probably the bit
number
$s^0$. In other words, this integer has approximately the same binary
decomposition than $e$, the sole exception being the digit $s^0$ (this
number is
then either $e+2^{10-s^0}$ or $e-2^{10-s^0}$, depending on the parity of
$s^0$,
\emph{i.e.}, it is equal to $e+(-1)^{s^0}\times 2^{10-s^0}$).
\item A shift to the left has been applied to the decimal part $y$,
losing by
doing so the common first digit $s^0$. In other words, $y$ has been
mapped into
$10\times y - s^0$.
\end{itemize}
To sum up, the action of $g$ on the points of $I$ is as follows: first,
make a
multiplication by 10, and second, add the same constant to each term,
which is
$\dfrac{1}{10}\left(e+(-1)^{s^0}\times 2^{10-s^0}\right)-s^0$.
\end{proof}

\begin{remark}
Finally, chaotic iterations used in our watermarking scheme are elements of the large family of
functions that
are both chaotic and piecewise linear (like the tent map~\cite{Wang20093089}).
\end{remark}

We are now able to evaluate the Lyapunov exponent of our digital watermarking
scheme based on chaotic iterations, which is now described by the iterations on $\mathds{R}$
of the $g$ function introduced in Definition~\ref{def:function-g-on-R}.

\section{Evaluation of the Lyapunov Exponent}



Let $\mathcal{L} = \left\{ x^0 \in \big[ 0, 2^{10} \big[ ~ \big/ ~ \forall n
\in \mathds{N}, x^n \notin I \right\}$, where $I$ is the set of points in the
real interval where $g$ is not differentiable (as it is explained in Proposition
\ref{Prop:derivabilite des ICs}). Then,

\begin{theorem}
$\forall x^0 \in \mathcal{L}$, the Lyapunov exponent of chaotic iterations having $x^0$ for initial condition is equal to
$\lambda(x^0) = \ln (10)$.
\end{theorem}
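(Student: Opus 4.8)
The plan is to reduce the statement directly to the two facts already at our disposal: the closed form of $\lambda$ given in Definition~\ref{def:lyapunov} and the value of $g'$ computed in Proposition~\ref{Prop:derivabilite des ICs}. First I would write the Lyapunov exponent of the one-dimensional system $x^{n+1} = g(x^n)$ as $\lambda(x^0) = \lim_{n \to +\infty} \frac{1}{n} \sum_{i=1}^{n} \ln\left| g'\!\left(x^{i-1}\right)\right|$, noting that this expression only makes sense when $g$ is differentiable at each iterate $x^{i-1}$. This is exactly the role of the set $\mathcal{L}$: by definition, $x^0 \in \mathcal{L}$ forces $x^n \notin I$ for every $n \in \mathds{N}$, so the orbit never hits the finite exceptional set of $10241$ non-differentiability points, and every term $g'\!\left(x^{i-1}\right)$ is well defined.

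Then I would invoke Proposition~\ref{Prop:derivabilite des ICs}, which asserts $g'(x) = 10$ for all $x \notin I$; in particular $g'\!\left(x^{i-1}\right) = 10$ for every $i \geqslant 1$ along such an orbit. Substituting into the average, each summand equals $\ln(10)$, hence $\frac{1}{n} \sum_{i=1}^{n} \ln\left| g'\!\left(x^{i-1}\right)\right| = \frac{1}{n}\left(n\ln(10)\right) = \ln(10)$ for every $n \geqslant 1$. Being constant, this sequence converges, so $\lambda(x^0) = \ln(10)$, which is the claimed value.

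The argument is short once the preparatory results are granted, so the only real care needed is of a technical nature. I would make explicit that Definition~\ref{def:lyapunov}, stated for maps of $\mathds{R}$, applies verbatim to $g$ on $\big[ 0, 2^{10} \big[$: the map sends this interval into itself, and for $x \notin I$ the point $x$ lies in the interior of some interval $\left[ \frac{n}{10}, \frac{n+1}{10}\right[$ on which $g$ is affine with slope $10$, so the genuine two-sided derivative exists there and equals $10$ --- no boundary effect or one-sided derivative subtlety can occur along an orbit confined to $\mathcal{L}$. In short, the whole content of the proof is the observation that restricting the initial condition to $\mathcal{L}$ is precisely what turns each term of the defining sum into the constant $\ln(10)$; the main "obstacle" is merely to phrase this reduction cleanly rather than to overcome any genuine difficulty.
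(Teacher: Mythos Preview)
Your proposal is correct and follows essentially the same route as the paper: both arguments reduce the computation to substituting $g'(x^{i-1})=10$ from Proposition~\ref{Prop:derivabilite des ICs} into the defining average, after observing that membership in $\mathcal{L}$ guarantees every iterate avoids $I$. Your version is simply more explicit about why the derivative is well defined at each step, but no new idea is introduced.
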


\begin{proof}
It is reminded that $g$ is piecewise linear, with a slop of 10 ($g'(x)=10$
where the function $g$ is differentiable). Then $\forall x \in
\mathcal{L}$, $\lambda (x) = \lim_{n \to +\infty} \dfrac{1}{n} \sum_{i=1}^n
 \ln \left| ~g'\left(x^{i-1}\right)\right|
=  \lim_{n \to +\infty} \dfrac{1}{n} \sum_{i=1}^n \ln \left|10\right| 
= \lim_{n\to +\infty} \dfrac{1}{n} n \ln \left|10\right| = \ln 10.$
\end{proof}

\begin{remark}
The set of initial conditions for which this exponent is not calculable is countable.
This is indeed the initial conditions such that an iteration value will be a
number having the form $\dfrac{n}{10}$, with $n\in \mathds{N}$. We can reach such a
 real number only by starting iterations on a \emph{decimal number}, as this latter must have
a finite fractional part.
\end{remark}

\begin{remark}
For a system having $\mathsf{N}$
cells, we will find, mutatis mutandis, an infinite uncountable set of initial conditions $x^0 \in \left[0;2^\mathsf{N}\right[$ such that
$\lambda (x^0) = \ln (\mathsf{N})$.
\end{remark}

So, it is possible to make the Lyapunov exponent of our digital watermarking scheme as large as possible, depending on the number of least significant coefficients of the cover media we decide to consider.
Obviously, a large Lyapunov exponent make it impossible to achieve the well-known Original Estimated
Attacks~\cite{Cayre2008}.


\section{Conclusion and Future Works}

As a conclusion, we have available to us now  a new quantitative property concerning our
digital watermarking scheme based on chaotic iteration: its Lyapunov exponent
 is equal to $\ln(\mathsf{N})$, where $\mathsf{N}$ is the number of 
least significant coefficients of the cover media. This exponent allows to quantify the amplification
of the ignorance on the exact initial condition (the media without watermark) after several iterations of the
watermaking process. It illustrates the disorder
generated by iterations of our watermarking process, reinforcing its chaotic nature.

Using the semiconjugacy described here, it will be possible in a
future work to compare the topological behavior of chaotic iterations on
$\mathcal{X}$ and  $\mathds{R}$, and to explore the topological security of
the watermarking scheme 
using this new topology. Finally, an analogue study of the two other
topologically secure schemes will be also conducted in order to compare these processes, being thus able to choose the best one according to the type of applications under consideration.

\bibliographystyle{plain}
\bibliography{jabref}

\end{document}